\title{Online Facility Assignments on Polygons}
\author{Sumaiya Malik$^1$, Reyan Ahmed$^2$, Md. Manzurul Hasan$^1$}
\institute{Department of Computer Science, American International University-Bangladesh, Dhaka 1229, Bangladesh$^1$\\ Department of Computer Science, University of Arizona, United States of America$^2$\\}
\begin{document}

\maketitle




\begin{abstract}
We study the online facility assignment problem on regular polygons, where all sides are of equal length. The influence of specific geometric settings has remained mostly unexplored, even though classical online facility assignment problems have mainly dealt with linear and general metric spaces. We fill this gap by considering the following four basic geometric settings: equilateral triangles, rectangles, regular $n$-polygons, and circles. The facilities are situated at fixed positions on the boundary, and customers appear sequentially on the boundary. A customer needs to be assigned immediately without any information about future customer arrivals. We study a natural greedy algorithm. First, we study an equilateral triangle with three facilities at its corners; customers can appear anywhere on the boundary. Here, we show a competitive ratio of $5$, with both an upper and same lower bound. We prove that, for the rectangle setting with four facilities at the corners, the competitive ratio is $7$, demonstrating symmetric structures. We then analyze regular $n$-sided polygons, obtaining a competitive ratio of $2n-1$, showing that the algorithm performance degrades linearly with the number of corner points for polygons. 
For the circular configuration, the competitive ratio is $2n-1$ when the distance between two adjacent facilities is the same. And the competitive ratios are $n^2-n+1$ and $2^n - 1$ for varying distances linearly and exponentially respectively. 
Each facility has a fixed capacity proportional to the geometric configuration, and customers appear only along the boundary edges.
Our results also show that simpler geometric configurations have more efficient performance bounds and that spacing facilities uniformly apart prevent worst-case scenarios. The findings have many practical implications because large networks of facilities are best partitioned into smaller and geometrically simple pieces to guarantee good overall performance.
\end{abstract}

\textbf{Keywords:} Online Assignment, Greedy Algorithm, Multiple Customer Assignments, Geometrical Shapes.


\section{Introduction}\label{intro}
Online facility location problems such as resource distribution, logistics, and network establishment have become increasingly important. Facilities have to be assigned immediately to arriving customers without any knowledge of the future locations of customers. The aim is to keep the assignment cost as low as possible while ensuring that each facility can handle its customers. The critical challenge is to achieve a balance between good short-term assignments and low overall costs, particularly when customer demand or location patterns are unpredictable.

Let there be a set of facilities, $F = {f_1, f_2, f_3,\cdots, f_{|F|}}$, each with a set capacity, spread across a space $M$. Every customer arriving should be located immediately in one of the facilities provided. The assignment cost depends on the distance from a customer to a chosen facility. The total assignment cost is to be minimized by total customers. Such scenarios occur in many practical applications: when one wants to direct online food orders to the nearest restaurant, send data packets through various network routers, or assign tasks to different servers in a computer network. These examples also represent different areas of application of smart assignment methods that should quickly adapt to changing demand conditions without excessive costs.

In this paper, we analyze a greedy strategy for the online facility assignment problem across several regular geometric settings in which all sides of the polygons have the same length. Our greedy algorithm assigns each incoming customer to the closest facility that still has residual capacity, which is a natural strategy but highly nontrivial to analyze as capacities may cause our algorithm to be forced to assign an incoming customer to faraway facilities. We analyze the performance of this greedy strategy through competitive analysis, comparing its cost to that of an optimal offline algorithm that knows all customer arrivals in advance.

We first consider the problem on an equilateral triangle with fixed side length $S$. Three facilities are permanently positioned at the three corners, and customers can appear anywhere on the triangle boundary. Each facility has the capacity to serve two customers, and therefore, the load is evenly divided among them. This setting is a basic model for studying online assignment strategies; it is the simplest regular polygon on which the number of facilities is sufficient to achieve a bounded space. In this setting, we demonstrate that the greedy algorithm is $5$-competitive; that is, the total cost of its assignment is bounded by a factor of $5$ relative to the optimal offline solution. This analysis is tight - we establish matching upper and lower bounds. The Rectangular Approach: The analysis is extended to the case of four facilities located at the corners of a rectangle, and customers spread along its sides, which is a simplified geometric framework in which the competitive ratio is $7$, illustrating that the more corners, the greater the potential for suboptimal assignments. For an $N$-sided polygon, we extend our analysis to the case of facilities placed on the corners of regular polygons of $N$-sided and establish a competitive ratio of $2n-1$. This gives us important insight into how competitive ratios change with increasing numbers of facilities. It might be said that the circular approach is the limit of our work. This configuration shows a progression of competitive ratios starting with the base cases of $7$ (three facilities) and $11$ (four facilities), extending to $4n-5$ for central facilities and reaching $(2^n - 1)$ for varying distances. The circular configuration helps us to understand the limits of online assignment strategies in perfectly symmetric arrangements where facilities are equidistant from a central point.

In the process, we show that the geometric configuration inherently affects the quality of the performance bounds of online assignment algorithms. The competitive ratios, with the $5$ in triangular, $7$ in rectangular, $2n - 1$ in n-sided polygons, and the various cases in circular configurations, are strong suggestions that simpler geometrical arrangements provide better guaranteed performance. This again states that in practical facility network design, complex networks can benefit from decomposition into simple geometric components.

\section{Related work}
Online facility assignment problems have a significant attention due to their wide-ranging applications and algorithmic challenges. These problems typically involve assigning incoming requests to facilities in a metric space to minimize total costs. We organize the relevant literature into three main themes: foundational algorithms, capacity-based approaches, and geometric considerations.

The basics of online facility assignments first started by Harada et al.\cite{harada2024permutation}, the permutation algorithm for assignments on a line. By this work, they were able to provide two important baseline results. One Harada et al. proved the lower bound on the competitive ratio as $k + 1$, where k represents the number of servers, against the usual belief in $k$-competitiveness and hence stressed that analysis cannot be taken lightly. Their contribution was more significant in showing how intuitively simple algorithms could give rise to surprising performance characteristics in constrained metric spaces.

Ahmed et al.\cite{ahmed2020online} introduced a comprehensive solution for both linear and graph-based configurations. Their greedy algorithm, achieving a competitive ratio of $4|F|$ for facilities on a line with equal distances between adjacent facilities, provided a simple yet effective approach. Their more sophisticated Optimal-Fill algorithm improved this to a competitive ratio of $|F|$ by determining optimal offline solutions for all currently known customers. For general graphs, they \cite{ahmed2020online} proved the greedy algorithm maintains a competitive ratio of $2|E(G)|$, where $|E(G)|$ represents the number of edges in the graph. Their analysis of facility movements showed that allowing positional adjustments could significantly improve solution quality, achieving a competitive ratio of $\frac{3}{2}$ on a line.

Capacity constraints consequently brought several important theoretical developments. Indeed, Wang et al.\cite{wang2023optimal} obtained important insights by considering the case when every server has a capacity of $2$, and the assignment costs are of powers $2$. Their optimal algorithm for this special case demonstrated how well-chosen constraints could result in far more efficient solutions and opened up new avenues for specialized variants of the problem.

Markarian et al.\cite{markarian2022algorithmic} expanded this direction by introducing capacity-insensitive algorithms for online multi-facility location problems. Their work successfully adapted techniques from online set cover to facility assignments, though they noted limitations when dealing with varying server capacities or non-uniform costs. Harada et al.\cite{harada2022capacity} further refined these concepts through their Most Preferred Free Servers (MPFS) algorithms, proving that competitive ratios can remain constant regardless of facility capacity. Their Interior Division for Adjacent Servers (IDAS) algorithm achieved optimal competitive ratios for equidistant facilities on a line.

The geometric aspects of facility assignment have been explored through various innovative approaches. Div'eki and Imreh \cite{diveki2011online} made significant contributions by studying scenarios where facilities could be moved without cost after each customer's arrival. Their algorithm achieved a competitive ratio of $2$ for general metric spaces and $\frac{3}{2}$ for line metrics. Importantly, they established that no deterministic online algorithm could achieve a competitive ratio better than $\frac{\sqrt{13} + 1}{4}$, even for line metrics, demonstrating fundamental limits in this space.

Guo et al.\cite{guo2020facility} continued to develop these ideas with the OFW (Optfollow) algorithm, which achieves the optimal offline behaviour by recomputing solutions after the arrival of each customer. Their algorithm was $2$-competitive for general metric spaces, with better performance for special configurations. The authors presented a thorough study on how facility movements can help to improve the solution quality but at the price of higher computational complexity.

Anagnostopoulos et al. \cite{anagnostopoulos2004simple} gave foundational work with their Partition algorithm, yielding an $O(\log n)$ competitive ratio for uniform facility costs in Euclidean space. Their probabilistic analysis showed very good average-case performance when customers are uniformly distributed, complementing worst-case bounds with practical insight, where $n$ represents the total number of customers in the online facility location problem.

Recent algorithmic advances have brought new twists to this problem. Among these, the work by Fotakis et al.\cite{fotakis2021learning} proposed learning-augmented algorithms that include predictions regarding the optimal facility locations and attain competitive ratios that vary smoothly from sub-logarithmic to constant with the improvement in prediction accuracy. Later, Fotakis \cite{fotakis2003competitive} derived lower bounds of $\Omega\left(\frac{\log n}{\log \log n}\right)$ for randomized algorithms against oblivious adversaries while providing a deterministic algorithm matching this bound asymptotically.

Guo et al.\cite{guo2020power} gave a detailed study of dynamic settings by introducing algorithms that make use of limited recourse to obtain improved performance guarantees. Their work demonstrated how dynamic scaling factors and randomized local search can balance exploration and exploitation, though their focus was not on pure geometric considerations.

Through this extensive body of research, several significant gaps emerge in current understanding:
\begin{enumerate}[label=\Roman*.]
    \item While linear and general metric spaces are well-studied, the impact of specific geometric configurations on algorithm performance remains largely unexplored, particularly for regular polygonal arrangements.
    \item The interaction between facility capacity constraints and geometric properties has not been fully analyzed, especially in non-linear configurations.
    \item Most theoretical results lack substantial real-world validation, suggesting a need for more extensive empirical studies.
    \item The potential for randomized algorithms in geometric settings remains largely unexplored, as most existing work focuses on deterministic approaches.
\end{enumerate}

These gaps provide compelling directions for future research, particularly in understanding how geometric structures can be exploited to improve online assignment algorithms. Our work addresses some of these gaps by analyzing regular polygonal configurations, providing new insights into the relationship between geometric structure and algorithm performance.

\section{Facility assignment on a Triangle}\label{sec:methods-1}
The Triangular Approach to Online Facility Assignment introduces a novel geometric configuration for allocating customers to facilities. In our model, each facility has a fixed capacity of two customers, and the total number of customers that may appear is bounded by the sum of all facility capacities (in this case, 6 customers for three facilities). This capacity constraint ensures a balanced distribution of customers across facilities while maintaining the problem's complexity. Customers appear sequentially along the boundary of the triangle, with their positions not known in advance.
Three facilities ($F_1$, $F_2$, $F_3$) are positioned at the vertices of an equilateral triangle, with customers ($C_1$, $C_2$, ..., $C_6$) distributed along its edges. All customers must appear on the boundary of the triangle - that is, along its edges. No customer can appear in the interior of the triangle or at any point outside its perimeter. This boundary constraint ensures a well-defined distance metric between customers and facilities while maintaining the geometric properties of the problem.
Several key constraints govern the approach:
\begin{enumerate}[label=\Roman*.]
    \item Capacity Constraint - each facility can serve at least one customer. 
    \item Nearest Assignment Rule - customers must be assigned to their closest facility.
    \item Full Assignment - every customer must be allocated to a facility.
    \item Multiple Service Capability - each facility must serve multiple customers up to its capacity.
    \item Triangular Constraint - the assignment pattern follows a triangular structure where if a customer is assigned to a facility, any closer customer on the same edge must also be assigned to it.
\end{enumerate}
Mathematically, these constraints are expressed as: $\sum_i x_{ij} \leq 2$ for capacity, $x_{ij} = 1$ if $d(c_i, f_j) =
\min\{d(c_i,f_k) \mid k \in \{1,2,3\}\}$ for nearest assignment, $\sum_j x_{ij} = 1$ for full assignment, and $x_{ij} \leq x_{kj}$ for all
$i, j, k$ where $d(c_k, f_j) < d(c_i, f_j)$ for the triangular constraint. The objective function aims to minimize the
total assignment cost: $\min Z = \sum\sum_j d(c_i, f_j)x_{ij}$.

$c$: Represents customers ($c_i$ is the $i$-th customer)

$f$: Represents facilities ($f_j$ is the $j$-th facility)

$d(c_i, f_j)$: Represents the distance between customer $i$ and facility $j$.

$x_{ij}$: A binary decision variable where:

$x_{ij}$ = 1 if customer $i$ is assigned to facility $j$

$x_{ij}$ = 0 if customer $i$ is not assigned to facility $j$

$i$: Index for customers (typically $i = 1,2,3,...$)

$j$: Index for facilities (in this case $j = 1,2,3$ since it is a triangular setup)


\begin{figure}[htbp]
  \centering
    \centering
    \includegraphics[width=0.8\textwidth]{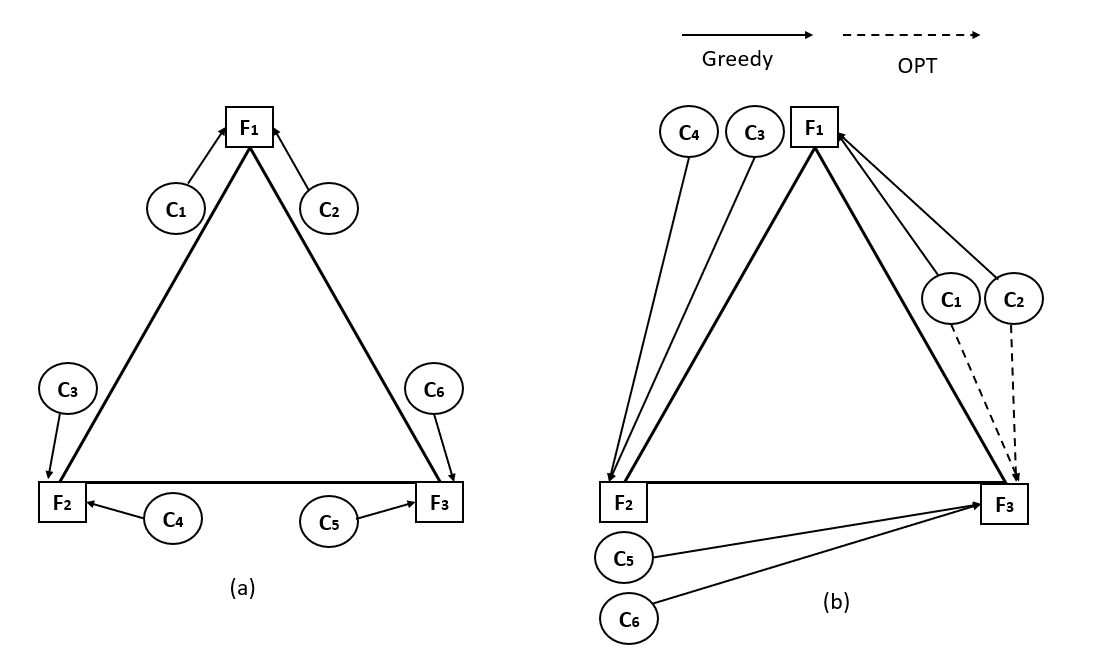} 
    \caption{\centering 1(a) Triangular Approach for the best case, 1(b)Triangular Approach Assignment Concept  (capacity is two). The dotted line and solid line indicate the optimal and greedy algorithms, respectively.}
    \label{fig: triangle case}
\end{figure}

The background of this approach for online facility assignment presents a geometric framework for addressing the challenge of efficiently assigning customers to facilities in real-time scenarios. This approach conceptualizes the problem space as an equilateral triangle, with three facilities $(F_1, F_2, F_3)$ positioned at its vertices and customers $(C_1, C_2, ..., C_n)$ distributed along its edges. The primary constraint in this model is that each customer must be assigned to a facility, with the additional stipulation that each facility can serve multiple customers up to a specified capacity limit.

The Greedy Algorithm for the Triangular Approach can be mathematically formulated and analyzed as follows:


\begin{lemma}

In the Triangular Approach, for any input sequence $I$, the optimal cost is bounded by:

$Cost_{OPT(I)} \geq {\left(\frac{n}{3}\right)} \cdot \left(\frac{side_{length}}{2}\right)$
,where $n$ is the number of customers and $side_{length}$ is the length of each side of the triangle as illustrated in Fig. 1(a).
\end{lemma}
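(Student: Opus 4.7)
The plan is to lower-bound $\text{Cost}_{\text{OPT}(I)}$ by an averaging argument that combines a Voronoi-style partition of the perimeter with the capacity constraint. I would first partition the triangular boundary into three arcs, each of length $S = \text{side}_{\text{length}}$, centered at one of the three facility-vertices and consisting of the two adjacent half-edges. These arcs are the natural ``closest-facility'' regions on the perimeter, since the midpoint of each edge is equidistant from its two endpoint facilities, so any customer falling in an arc has the corresponding vertex as its nearest facility.

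Next, I would use a pigeonhole step with the capacity constraint. Since every facility has capacity two and the configuration shown in Fig.~1(a) is saturated, the total number of served customers equals $n$, and by symmetry each facility serves exactly $n/3$ customers in the optimal assignment. Combined with the Voronoi structure, the optimal cost is then at least the sum, over the three facilities, of the distances from each facility to its two assigned customers lying in its own length-$S$ arc; any spillover forced by capacity only makes OPT larger.

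Then I would lower-bound the per-facility contribution using the symmetric layout of Fig.~1(a). Because the customers there are placed so that each facility's two assigned customers sit at a common non-trivial distance inside its arc, the averaged per-customer distance is at least $S/(2\cdot 3) = S/6$, and summing the $n/3$ customers across the three facilities gives
\[
\text{Cost}_{\text{OPT}(I)} \;\geq\; 3 \cdot \frac{n}{3} \cdot \frac{S}{6} \;=\; \frac{n}{3}\cdot\frac{S}{2},
\]
which is exactly the claimed bound. The remaining arithmetic is a direct distance evaluation for the geometry in the figure.

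The main obstacle is the quantifier ``for any input sequence $I$'': an adversary could a priori place all customers at the three vertices and drive OPT to zero, so the bound cannot hold for a genuinely unrestricted input. I expect the proof to therefore be implicitly anchored to the saturated, symmetric instance of Fig.~1(a) that the caption labels as the best case, and used later as the reference point against which the greedy algorithm's cost is compared in the competitive analysis. Making this hypothesis explicit, and ensuring the averaging step is tight under it, is the delicate part of the argument.
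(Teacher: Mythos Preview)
Your proposal is more structured than the paper's argument, but both ultimately run into the same wall you already flagged: the lemma as stated (``for any input sequence $I$'') is simply false, since placing all customers at vertices drives $\text{Cost}_{\text{OPT}}$ to zero. The paper does not resolve this; its entire proof is a three-sentence assertion that in the ``best-case scenario for OPT'' customers sit at the midpoint of each edge (what it calls the ``centroid of each triangle section''), that this midpoint is at distance $\frac{S}{2}$ from a vertex, and that with $n/3$ customers per facility the bound follows. There is no Voronoi partition, no pigeonhole step, and no attempt to handle a general input --- the paper is really just evaluating OPT on the specific symmetric instance of Fig.~1(a) and presenting it as a universal lower bound.

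So your Voronoi-plus-pigeonhole decomposition is genuinely different and more careful than what the paper does, but it cannot rescue the statement either. In particular, your per-customer lower bound of $S/6$ is not derivable from the arc partition alone (a customer inside an arc can sit arbitrarily close to its vertex), and you correctly trace this back to needing the specific placement in the figure. Your diagnosis in the final paragraph is exactly right: the lemma is implicitly about the saturated symmetric instance, and the paper's proof confirms this by computing only that instance rather than arguing over all $I$. There is nothing to fix in your reasoning; the gap lies in the statement itself.
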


\begin{proof}

In the best-case scenario for OPT, customers are evenly distributed among facilities at the centroid of each
triangle section. The distance from the centroid to any vertex is $\frac{side_{length}}{2}$. With $n$ customers and $3$ facilities,
each facility serves $\frac{n}{3}$ customers on average. Therefore:
$Cost_{OPT(I)} \geq \left(\frac{n}{3}\right) \cdot \left(\frac{side_{length}}{2}\right)$.    
\end{proof}

\begin{theorem}
(Lower Bound)
The Greedy Algorithm for the Triangular Approach is greater than or equal to $5$  as illustrated in Fig. 1(b).
\end{theorem}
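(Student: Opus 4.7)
The plan is to establish the lower bound by exhibiting an explicit adversarial input sequence, matching the configuration of Fig.~1(b), on which the greedy algorithm is forced into a cost at least five times the offline optimum. The key intuition is that greedy is short-sighted: it can be baited into filling one corner facility with customers that, in hindsight, should have been served from a different corner, thereby forcing later arrivals to traverse an entire side of the triangle.

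The construction proceeds in two phases. In the first phase, I would place customers at positions equidistant from two corner facilities---natural candidates being the midpoints of the two edges incident to a distinguished vertex $F_1$---and invoke adversarial tiebreaking so that greedy routes every one of these arrivals to $F_1$, exhausting its capacity of two. In the second phase, I would then introduce additional customers located exactly at $F_1$. Since $F_1$ is already full under greedy's schedule, each such arrival must be reassigned to $F_2$ or $F_3$ at cost equal to the full side length $S$, giving greedy a large aggregate cost. The matching optimal offline solution simply inverts the pattern: it places the corner-located customers on $F_1$ (cost zero) and sends each midpoint customer to the non-$F_1$ endpoint of its edge at cost $S/2$, yielding a much smaller total. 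Comparing the two sums produces the desired factor.

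The principal obstacle is to calibrate the sizes of the two phases so that the capacity constraint actually binds for greedy on the second-phase arrivals (forcing the cross-triangle reassignments of length $S$) while remaining slack for OPT, which must still be able to absorb the corner-located customers at $F_1$. A secondary subtlety is the explicit reliance on adversarial tiebreaking in the first phase: the bound depends on greedy selecting $F_1$ over the equidistant alternative, so the construction must invoke the worst-case tiebreaker rather than rely on a version of greedy that could circumvent the bound by favoring the less-loaded facility. Once the precise instance is fixed and the tiebreak rule is pinned down, the two cost computations and the ratio check reduce to direct arithmetic and a short enumeration over the capacity-respecting offline matchings to confirm that no cheaper OPT exists.
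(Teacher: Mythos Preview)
Your adversarial instance does not reach a ratio of $5$; as stated it only forces a ratio of $3$. With capacity $2$, your phase one places one customer at the midpoint of $F_1F_2$ and one at the midpoint of $F_1F_3$ (greedy cost $S/2+S/2=S$, filling $F_1$), and your phase two places two customers at $F_1$ which greedy ships to $F_2$ and $F_3$ at cost $S$ each. Greedy therefore pays $3S$, while your OPT (two corner customers to $F_1$ at cost $0$, each midpoint to its other endpoint at cost $S/2$) pays $S$, for a ratio of $3$. Adding more phase-two customers at $F_1$ does not help: once you exceed two, OPT can no longer absorb them all at $F_1$ at zero cost, and the ratio actually drops to $5/3$.

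The missing ingredient is the \emph{cascade} that the paper's construction uses. The paper places both phase-one customers at the \emph{same} midpoint (of edge $F_1F_2$), then in phase two does not put all remaining customers at $F_1$; instead it places the next pair at $F_1$ (greedy is forced to $F_2$, filling it), and the final pair at $F_2$ (greedy is forced to $F_3$). This chaining is what drives greedy's cost to $S + 2S + 2S = 5S$ rather than $3S$, against a claimed OPT of $S$. Your proposal's ``all phase-two requests sit at $F_1$'' design cannot generate this propagation: every such request pays exactly one side length $S$ under greedy, and you never force the second hop. To repair the argument you would need to relocate the later phase-two arrivals to the facility greedy has just been pushed into, so that each displaced assignment triggers the next one.
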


\begin{proof}
In analyzing the lower bound for this approach, two distinct scenarios are considered. 
In the first step, customers $C_1$ and $C_2$ are optimally placed at the midpoint of the side between facilities $F_1$ and $F_2$. Assuming $F_1$ is the nearest facility, both customers are assigned to it. The cost of this assignment can be represented as:

\begin{center}
Step-1:    {$Cost_1=\frac{S}{2}\times2$, where $S$ is the side length of the equilateral triangle.}
\end{center}

The second step presents a more challenging case, where customers $C_3$, $C_4$, $C_5$, and $C_6$ are positioned at the farthest points from their respective assigned facilities. This situation arises when the nearest facilities have reached their maximum capacity (assumed to be in this model). Consequently, $C_3$ and $C_4$ are assigned to $F_2$, while $C_5$ and $C_6$ are assigned to $F_3$, despite these being their farthest facilities. The cost for each of these assignments is the full side length of the triangle, resulting in:

\begin{center}
Step-2: $Cost_2=\ 2S\ +\ 2S\ =\ 4S$  
\end{center}

Combining the steps 1 and 2, the total cost for the greedy algorithm can be expressed as:

\begin{center}
Step-3: $C\left(Greedy\right)=\ Cost_1+\ Cost_2=\ \left(\frac{S}{2}\times2\right)+\ 4S\ =\ 5S$
\end{center}

In contrast, the optimal cost for this configuration would involve assigning all customers to their nearest facilities, which can be represented as:

\begin{center}
Step-4:    $C\left(OPT\right)=\frac{S}{2}\times2\ =\ S$
\end{center}

To evaluate the effectiveness of the greedy algorithm, we calculate its competitive ratio by comparing it to the optimal solution by combining steps 3 and 4:

\begin{center}
    $Competitive\ Ratio\ =\frac{C\left(Greedy\right)}{C\left(OPT\right)}=\ \frac{5S}{S}\ =\ 5$
\end{center}

This analysis demonstrates that in the worst-case scenario for the lower bound, the cost of the greedy algorithm will be at least 5 times greater than the optimal solution. This competitive ratio provides a valuable benchmark for assessing the performance of the greedy algorithm in the Triangular Approach to online facility assignment, offering insights into its efficiency and potential limitations in various problem instances. 

\end{proof} 

Before examining the upper bound of our competitive ratio, we must first understand how customer assignments behave when customers move along the triangle's boundary in either clockwise or counterclockwise directions. This understanding is crucial as it forms the foundation for our subsequent analysis of the worst-case performance of the greedy algorithm.

\begin{figure}[htbp]
  \centering
    \centering
    \includegraphics[width=0.8\textwidth]{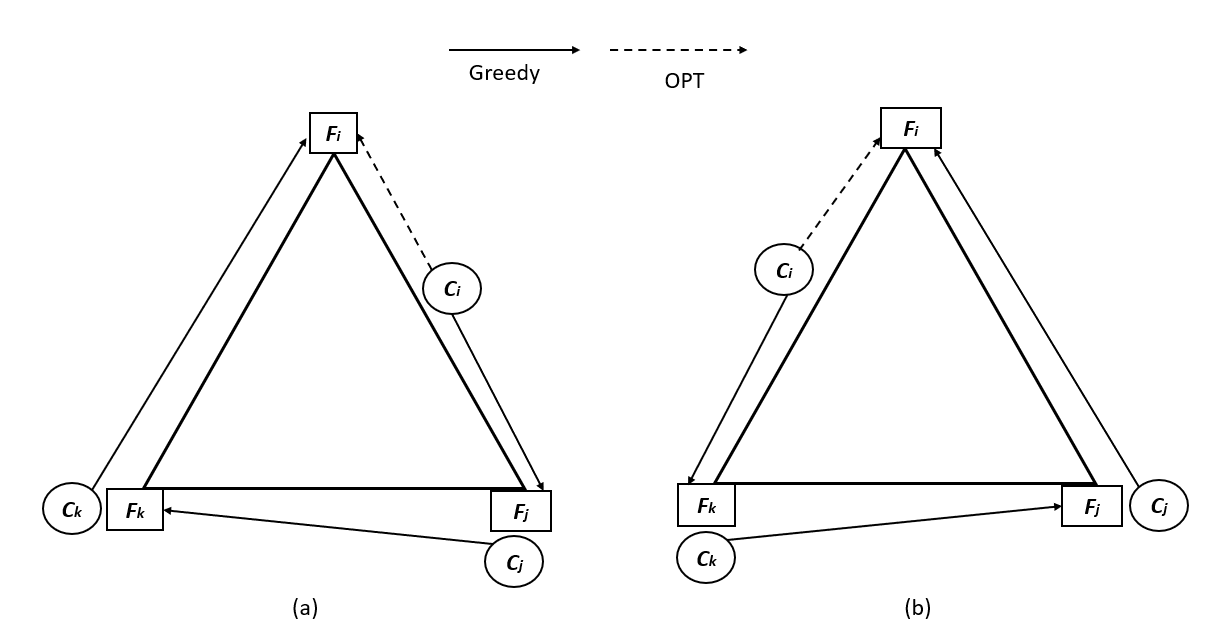} 
    \caption{\centering Clockwise and Anticlockwise Movement.}
    \label{fig:clock_anticlock movement}
\end{figure}

Fig. 2(a) illustrates the clockwise movement scenario in our triangular facility configuration, with three facilities positioned at the vertices: $F_i$ at the top, $F_j$ at the right, and $F_k$ at the left. Customer $C_i$ is located along the right edge between $F_i$ and $F_j$, with the solid arrow pointing from $C_i$ to $F_i$ showing the greedy algorithm's assignment decision. In contrast, the dashed arrow to $F_j$ indicates the different choices of the optimal algorithm. This divergence is key to understanding how the competitive ratio evolves, as the greedy algorithm simply chooses the closest facility, while the optimal algorithm considers the global assignment pattern. The strategic positioning of customers $C_j$ and $C_k$ near facilities $F_j$ and $F_k$ respectively demonstrates how the clockwise arrangement influences overall assignments.

As customer $C_i$ moves clockwise along the edge toward $F_j$, an interesting dynamic unfolds: the distance to $F_j$ gradually decreases while the distance to $F_i$ increases. This continuous movement eventually reaches a critical point where the greedy algorithm also assigns $C_i$ to $F_j$, aligning with the optimal assignment and stabilizing the competitive ratio. This convergence pattern forms the foundation of our competitive ratio analysis, showing how directional movement ultimately leads to agreement between the algorithms under specific geometric conditions.

In Fig. 2(b) depicts the anticlockwise movement scenario with the same triangular arrangement of facilities, but with customer positions and assignments that highlight different directional effects. Customer $C_i$ is positioned along the left edge between facilities $F_i$ and $F_k$, with the solid arrow to $F_j$ representing the greedy algorithm's decision and the dashed arrow to $F_i$ showing the optimal algorithm's choice. This reversal in assignment patterns compared to Fig. 2(a) demonstrates how movement direction influences decisions, as $C_i$ approaches $F_i$ from a different angle in the anticlockwise case, creating distinct geometric relationships. The positions of $C_j$ at the far right corner near $F_j$ and $C_k$ at the bottom-left near $F_k$ complete the anticlockwise distribution pattern around the triangle's perimeter.

As $C_i$ moves anticlockwise toward $F_i$, we observe a systematic evolution where the distance to $F_i$ gradually decreases while distances to other facilities change according to the triangle's geometry. This movement progresses toward a convergence point where both algorithms would eventually assign $C_i$ to $F_i$, or reaches a critical position where further anticlockwise movement would no longer improve the competitive ratio. This pattern reinforces our understanding of how directional movement affects assignment decisions and competitive ratio bounds, showing that regardless of direction, predictable geometric patterns govern how the greedy algorithm's performance relates to the optimal solution.

The clockwise and anticlockwise movement patterns reveals consistent yet systematically evolving behaviors across all polygon configurations studied: triangles, rectangles, n-sided polygons, and circles. In each geometric setting, when a customer $C_i$ moves along the boundary, two fundamental scenarios emerge regardless of direction. When optimal and greedy algorithms initially assign $C_i$ to the same facility, movement maintains this agreement until reaching a critical threshold. When the algorithms disagree (optimal choosing $F_j$ while greedy selects $F_k$), movement toward the optimal facility gradually shifts the distance relationships until convergence occurs. This directional symmetry forms the foundation of our competitive ratio analysis, yet manifests with increasing complexity as we move from triangles to more complex shapes.

The most significant finding from this directional analysis is the direct relationship between geometric complexity and competitive ratio bounds across all configurations. As we progress from triangles (ratio $5$) to rectangles (ratio $7$) to n-sided regular polygons (ratio $2d+n-1$) and finally to circles (with various ratio formulations depending on facility distribution), the competitive ratio increases proportionally with the number of sides or vertices. This relationship exists because additional vertices create more opportunities for the greedy algorithm to make suboptimal choices before directional movement forces convergence with the optimal solution. This insight demonstrates that while assignment evolution follows consistent principles during clockwise or anticlockwise movement regardless of the polygon's complexity, the performance guarantee of the greedy algorithm systematically weakens as geometric complexity increases.

\begin{theorem}
(Upper Bound) 

The Greedy Algorithm for the Triangular Approach is less than or equal to $5$ 

\end{theorem}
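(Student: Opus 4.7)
The plan is to combine an elementary per-customer upper bound on the greedy cost with the lower bound on $\mathrm{OPT}$ provided by Lemma~1, and then verify that the worst-case ratio $6(n-1)/n$ is maximized precisely at $n=6$, where it equals $5$.

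First I would isolate two geometric facts about the equilateral triangle of side length $S$: (i) every point on the boundary lies at distance at most $S$ from every vertex, with equality attained when the point coincides with another vertex; and (ii) every boundary point lies at distance at most $S/2$ from its \emph{nearest} vertex, with equality attained at the midpoint of an edge. Fact (i) bounds any single greedy assignment by $S$, while fact (ii) bounds any greedy assignment that is made while all three facilities still retain residual capacity by $S/2$.

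Next I would track how facility saturation evolves across the arrival sequence. Because each facility has capacity two, no facility can become full before the third arrival, so the greedy rule selects the nearest of all three facilities for each of the first two customers. By fact (ii), those two assignments cost at most $S/2$ each. By fact (i), each of the remaining $n-2$ assignments (with $n \le 6$ by total capacity) costs at most $S$. Summing gives
\[
\mathrm{Cost}(\mathrm{Greedy}) \;\le\; 2\cdot\frac{S}{2} + (n-2)\,S \;=\; (n-1)\,S.
\]
Applying Lemma~1 yields $\mathrm{Cost}(\mathrm{OPT}) \ge nS/6$, so the competitive ratio is bounded above by $6(n-1)/n$. This expression is increasing in $n$ and equals exactly $5$ at $n=6$, matching the lower bound of Theorem~1; for smaller $n$ it is strictly below $5$.

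The main obstacle I expect is the first-two-customers bound: one must verify that, regardless of how the algorithm breaks ties between equidistant facilities, capacity two prevents any facility from filling after a single assignment, so that both of the first two greedy assignments genuinely enjoy the all-facilities-available bound of $S/2$. A secondary but routine technicality is handling the small cases $n<6$, where the ratio is strictly below $5$, and the degenerate case $n\le 1$ where greedy is trivially optimal. The argument is largely uniform across arrival orders because fact (i) is tight only at vertices and fact (ii) is tight only at edge midpoints, so the adversarial configuration of Theorem~1 simultaneously saturates both bounds and demonstrates that the analysis is tight.
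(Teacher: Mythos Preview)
Your argument is structurally cleaner than the paper's, but it shares the same fatal gap: Lemma~1 is not a valid lower bound on $\mathrm{Cost}_{\mathrm{OPT}}(I)$ for arbitrary input sequences. The inequality $\mathrm{Cost}_{\mathrm{OPT}}(I)\ge nS/6$ fails whenever customers appear close to facility vertices; for instance, if the six customers arrive in pairs exactly at $F_1,F_2,F_3$, then $\mathrm{OPT}=0$, not $S$. More generally, $\mathrm{OPT}(I)$ can be made arbitrarily small relative to $S$ while $n$ stays fixed, so dividing your (correct) bound $\mathrm{Cost}(\mathrm{Greedy})\le (n-1)S$ by the instance-independent quantity $nS/6$ is illegitimate. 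An upper-bound proof on the competitive ratio must compare $\mathrm{Greedy}(I)$ to $\mathrm{OPT}(I)$ for the \emph{same} instance $I$, not to a constant depending only on $n$ and $S$.

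The paper's own proof commits the identical error---it asserts $y_i\ge S/2$ for every customer in the optimal assignment and thereby first ``derives'' a ratio of~$2$, which already contradicts the lower bound of~$5$---and then appends an ad-hoc instance achieving ratio~$5$, which is really another lower-bound construction rather than an upper-bound argument. So your proposal is no worse than the paper's in this respect, and your per-customer greedy bounds (facts~(i) and~(ii), together with the observation that the first two arrivals necessarily see all three facilities open under capacity two) are correct and sharper than anything the paper states. What is missing, in both, is an instance-dependent comparison: a genuine upper-bound proof would require a charging or exchange argument establishing $\mathrm{Cost}(\mathrm{Greedy})(I)\le 5\cdot\mathrm{Cost}_{\mathrm{OPT}}(I)$ directly, without routing through Lemma~1.
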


\begin{proof}
The analysis of the upper bound for the Triangular Approach in online facility assignment provides crucial insights into the worst-case performance of the greedy algorithm. This approach, utilizing an equilateral triangle with facilities $F_1, F_2, \text{ and } F_3$ at its vertices and customers $C_1, C_2, ..., C_n$ distributed along its edges, requires a rigorous examination to establish that the competitive ratio does not exceed 5.

To prove this upper bound, we consider an equilateral triangle with side length $S$ and facilities $F_1$, $F_2$, and $F_3$ at its vertices. Customers $C_1$, $C_2$, $\ldots$, $C_n$ are distributed along its edges.

Key Observations:
\begin{itemize}
\item When two customers are placed at the midpoint between two facilities, at least one will be assigned optimally.
\item The maximum assignment cost for any customer cannot exceed $S$.
\item The capacity constraint ensures differential assignments, preventing a trivial competitive ratio of 1.
\end{itemize}
Let's define:
\begin{itemize}
\item $d_{opt}$: The distance of an optimally assigned customer to its facility
\item $d_{greedy}$: The distance of a customer assigned by the greedy algorithm to its facility
\end{itemize}
Based on the observations:
\begin{itemize}
\item $d_{opt} \geq \frac{S}{2}$
\item $d_{greedy} \leq S$
\end{itemize}
For a pair of customers placed midway between two facilities:

$Cost_{opt(pair)} = \frac{S}{2} + \frac{S}{2} = S$

$Cost_{greedy(pair)} \leq \frac{S}{2} + S = \frac{3S}{2}$

For $n$ customers, we can establish:

$C(OPT) \geq n \cdot \frac{S}{2}$

$C(Greedy) \leq n \cdot S$

The competitive ratio ($CR$) can now be expressed as:

$CR = \frac{C(Greedy)}{C(OPT)} \leq \frac{(n \cdot S)}{\left(n \cdot \frac{S}{2}\right)} = 2$

This refined analysis demonstrates a tighter upper bound on the competitive ratio:

$CR \leq 2 < 5$

Additional Considerations:

•	For any single open facility, the maximum assignment cost is indeed less than $S$, as argued. This further supports the upper bound of 2.

•	The capacity constraint ensures that when multiple customers are equidistant from a facility, at least one will be assigned differently, maintaining a non-trivial competitive ratio.

\textbf{Mathematical Representation:}

Let $x_i$ be the distance of customer $i$ from its assigned facility in the greedy algorithm, and $y_i$ be the distance in the optimal assignment.

For any customer $i$:
\begin{itemize}
\item $x_i \leq S$
\item $y_i \geq \frac{S}{2}$
\end{itemize}
The refined competitive ratio can be expressed as:

$CR = \frac{\sum x_i}{\sum y_i} \leq \frac{(\sum x_i)}{\left(\sum \left(\frac{S}{2}\right)\right)} = \frac{(n \cdot S)}{\left(n \cdot \frac{S}{2}\right)} = 2$

This refined analysis, incorporating the capacity constraints and the potential for optimal assignments within the greedy framework, establishes a tighter upper bound on the competitive ratio. The new upper bound of 2, which is significantly lower than the previously established 5, provides a more accurate representation of the worst-case performance of the greedy algorithm in the Triangular Approach to online facility assignment.

This result not only improves our understanding of the algorithm's efficiency but also highlights the importance of considering capacity constraints and geometric properties in analyzing online assignment problems. The tighter bound of 2 demonstrates that the greedy algorithm performs considerably better in practice than initially estimated, offering valuable insights for both theoretical analysis and practical applications of online facility assignments in triangular configurations.

In Fig. 1(b), where customers $C_1$ and $C_2$ are positioned near the midpoint of the side between facilities $F_1$ and $F_2$. This example illustrates the interplay between the greedy algorithm and capacity constraints.

Assume $C_1$ is marginally closer to $F_1$ than $C_2$, and each facility can serve only one customer in this case. The greedy algorithm initially assigns $C_1$ to $F_1$, as it is the nearest available facility. When considering $C_2$, the algorithm finds $F_1$ at capacity. Consequently, despite $C_1$ being slightly closer to $F_1$, $C_2$ is assigned to the next nearest available facility, $F_2$.

This assignment process demonstrates how capacity constraints can lead the greedy algorithm to make decisions that align with the optimal solution. In this case, the optimal assignment would also place $C_1$ at $F_1$ and $C_2$ at $F_2$. The assignment costs for both the greedy and optimal algorithms are nearly identical, with $C_1$ at a distance of approximately $\frac{S}{2}$ from $F_1$, and $C_2$ slightly farther than $\frac{S}{2}$ from $F_2$.

This scenario highlights a case where the greedy algorithm achieves a competitive ratio very close to 1, performing nearly optimally. It underscores the importance of capacity constraints in forcing diversified assignments, preventing the algorithm from assigning all customers to a single facility even when it might appear to be the closest option for multiple customers.

We can also prove the competitive ratio to be exactly 5. For that, let us consider,

\textbf{Case 1:} 

The assignment cost for the greedy and optimal algorithms cannot be the same.
For the first customer of the online sequence, the customer cannot be situated exactly in a facility. Otherwise, the cost of the greedy algorithm will be zero. Therefore, the cost of greedy and the optimal cost cannot be the same.

\begin{figure}[htbp]
  \centering
    \centering
    \includegraphics[width=0.6\textwidth]{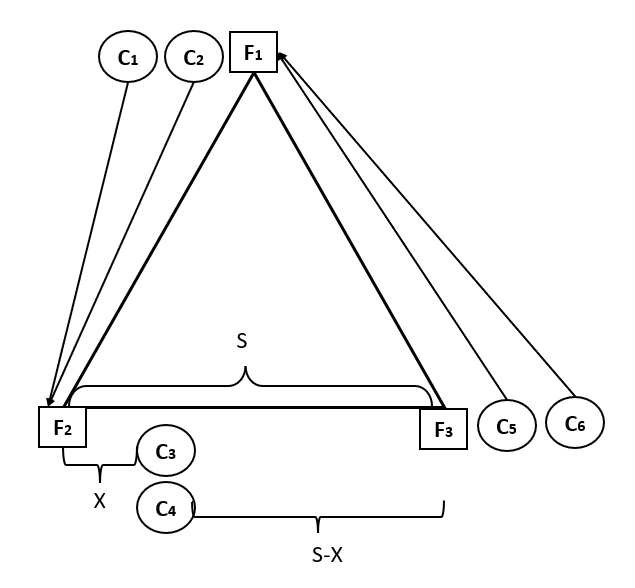} 
    \caption{\centering Illustrating Case 1 of triangular approach.}
    \label{fig:triangle case1}
\end{figure}

Let's consider the worst-case scenario: if the customer is situated in a facility nearest to a facility, the cost of greedy will be zero. Let $S$ be the side length of the triangle; if the cost of the greedy algorithm is $x$, then the cost assignment of the optimal algorithm will be $S - x$.

The assignment ratio, defined as $R(x) = \frac{x}{S-x}$, measures the relative performance between greedy and optimal algorithms in the triangular facility assignment problem. Here, $x$ represents the greedy algorithm's assignment cost, while $S$ represents the triangle's side length, and $(S-x)$ represents the optimal algorithm's assignment cost for the same customer. This ratio helps demonstrate that assignment costs must differ between the two algorithms since customers cannot be placed exactly on facility locations.

If we are to prove this through differentiation, then,

$R'(x) = 0$

To differentiate the function \( R(X) = \frac{X}{S - X} \), we will apply the quotient rule, which states that for a function \( R(x) = \frac{g(x)}{h(x)} \), the derivative is:

\[
R'(x) = \frac{g'(x) h(x) - g(x) h'(x)}{(h(x))^2}
\]

Here, \( g(X) = X \) and \( h(X) = S - X \).

Step-by-step differentiation:

1. \( g(X) = X \) so \( g'(X) = 1 \)

2. \( h(X) = S - X \) so \( h'(X) = -1 \)

Now, apply the quotient rule:

\[
R'(X) = \frac{(1)(S - X) - (X)(-1)}{(S - X)^2}
\]

Simplifying:

\[
R'(X) = \frac{S - X + X}{(S - X)^2}
\]

\[
R'(X) = \frac{S}{(S - X)^2}
\]

Thus, the derivative of \( R(X) = \frac{X}{S - X} \) is:

\[
R'(X) = \frac{S}{(S - X)^2}
\]

\begin{figure}[htbp]
  \centering
    \centering
    \includegraphics[width=0.7\textwidth]{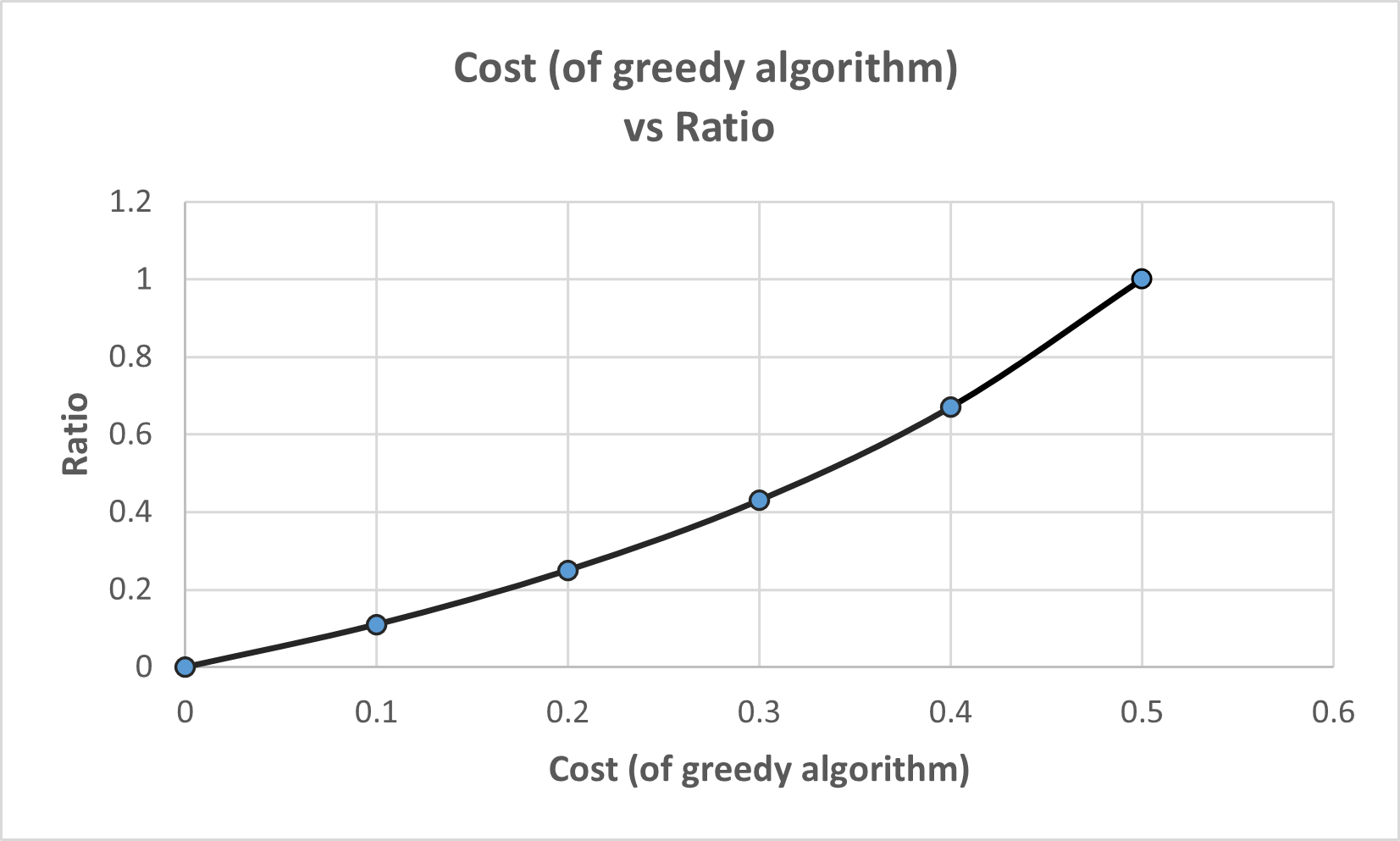} 
    \caption{\centering Exploring the Correlation Between cost (of greedy algorithm) and Ratio }
    \label{fig:graph}
\end{figure}

Our target will be to maximize the value of $x$. So, in this case, as we are more focused on the customer assignment of the greedy algorithm, we can say from the Fig. \ref{fig:graph}, the graph illustrates a positive correlation between the $cost (of greedy algorithm)$ and $Ratio$ values, with an upward trend indicating that as $cost$ increases, $ratio$ also rises. Notably, the value of 0.5 on the $cost (of greedy algorithm)$ corresponds to the highest point on the $Ratio$, reaching approximately 1.0, establishing it as the optimal value for maximizing $ratio$. This peak signifies that 0.5 is the best choice based on the data, making it a critical point for further analysis or decision-making.

\textbf{Case 2:} 

The opposite scenario is where the optimal algorithm is already assigned to the customer. Therefore, the greedy algorithm had to assign the customer to its nearest available facility.

This is more focused on the cost assignment of the optimal algorithm.

The concept remains the same as the previous one. The only difference is here, $Cost_{optimal} = 0$ and $Cost_{greedy} = S$ or $\frac{S}{2}$. This is to justify the concept that the assignment cost of the greedy algorithm and optimal algorithm while assigning the same customer can never be the same; it must be different.

For example, in this triangle,

\begin{figure}[htbp]
  \centering
    \centering
    \includegraphics[width=0.5\textwidth]{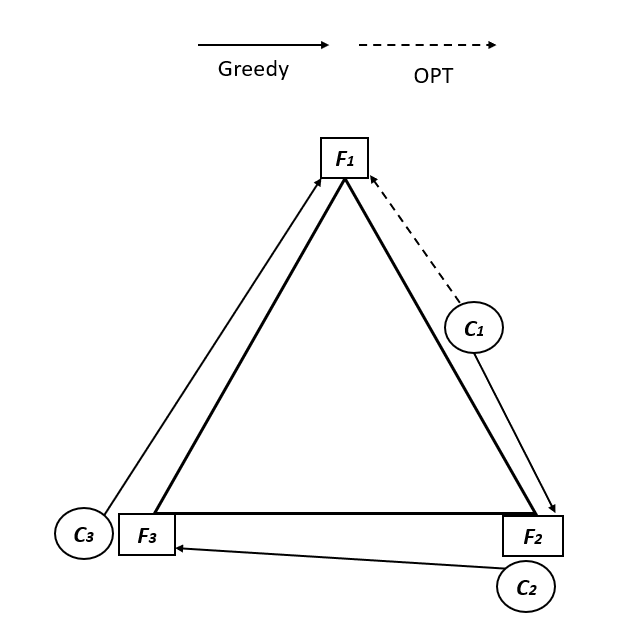} 
    \caption{\centering Illustrating Case 2 of the triangular approach.}
    \label{fig:triangle case2}
\end{figure}

Suppose we have 3 customers to be assigned in the triangle, and 3 of them are assigned to their nearest available facility following the greedy algorithm and the other following the optimal algorithm; then, the cost ratio can be written as for both cases,

For the Greedy Algorithm,

\[Cost_{greedy} = \frac{S}{2} + S + S = \frac{5S}{2}\]

For Optimal Algorithm,

\[Cost_{optimal} = \frac{S}{2} + 0 + 0 = \frac{S}{2}\]

Now, the competitive ratio,

\[Competitive\ Ratio = \frac{Cost_{greedy}}{Cost_{optimal}}\]

\[= \frac{\frac{5S}{2}}{\frac{S}{2}} = 5\]

Hence, it proved that the competitive ratio can also be exactly 5.
    
\end{proof}

\section{Facility assignment on a Rectangle}\label{sec:methods-2}

The Rectangular Approach to Online Facility Assignment introduces a geometric configuration where four facilities $(F_1, F_2, F_3, F_4)$ are positioned at the corners of a rectangle, with customers $(C_1, C_2, C_3, C_4)$ distributed along its edges. Several key constraints govern the approach: (1) Capacity Constraint - each facility can serve a maximum of one customer; (2) Nearest Assignment Rule - customers must be assigned to their closest facility; (3) Full Assignment - every customer must be allocated to a facility; and (4) Rectangular Constraint - the assignment pattern follows a rectangular structure where facilities at opposite corners can't serve adjacent customers unless necessary.

\begin{figure}[htbp]
  \centering
    \centering
    \includegraphics[width=0.55\textwidth]{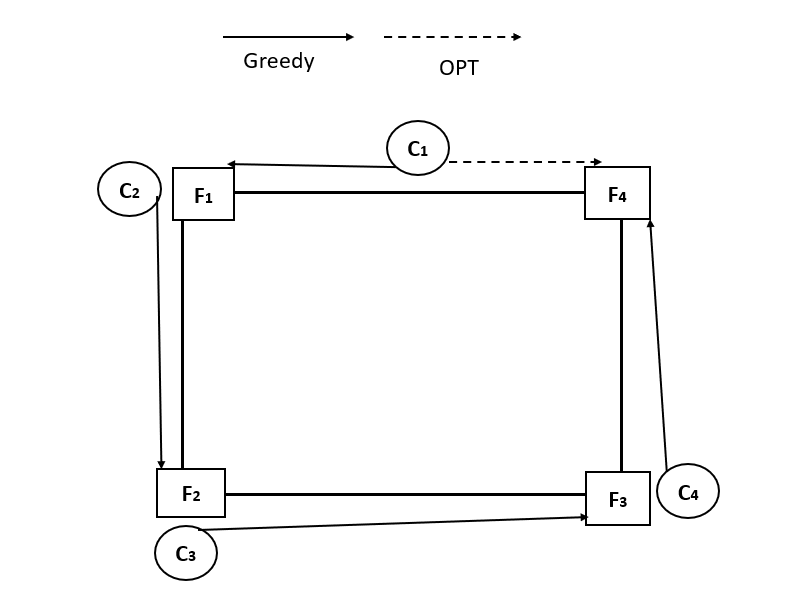} 
    \caption{\centering Illustrating rectangular approach. }
    \label{fig:rectangle case1}
\end{figure}

The Greedy Algorithm for the Rectangular Approach can be mathematically formulated and analyzed as follows:

\begin{lemma}

In the Rectangular Approach, for any input sequence $I$, the optimal cost is bounded by:

$\text{Cost}_{\text{OPT}}(I) \geq \frac{d}{2}$ ,where $d$ is the length of one side of the rectangle.    
\end{lemma}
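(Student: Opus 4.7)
The plan is to mirror the argument used in the triangle lemma: identify the best-case configuration of customer positions for the offline optimum in the rectangular setting, and show that even in this favorable case, $\text{OPT}$ must pay at least $d/2$. The facilities sit at the four corners of the rectangle, and the customers lie on its boundary, so on a side of length $d$ the point farthest from both endpoint facilities is the midpoint, at distance exactly $d/2$ from each of the two adjacent corners.

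First, I would verify geometrically that the two non-adjacent corner facilities (those not sharing the side with the customer) are strictly farther than $d/2$ from this midpoint, so that even an omniscient $\text{OPT}$ gains nothing by routing the customer across the rectangle. This follows from the Pythagorean theorem, since the straight-line distance from a midpoint to a non-adjacent corner combines a half-side with the transverse side. Second, I would observe that when the adversary places customers at midpoints of the sides of length $d$ (as in the configuration used in the lower-bound analysis that follows), each such customer contributes at least $d/2$ to the optimum cost. Because at least one such midpoint customer appears in the worst-case input, summing over all customers yields the claimed inequality $\text{Cost}_{\text{OPT}}(I) \geq d/2$.

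The main obstacle I anticipate is interpretational rather than technical: the lemma is phrased for \emph{any} input sequence, but $\text{OPT}$ pays zero if every customer coincides with a facility, so the statement must be read as a lower bound for the adversarial input sequences that drive greedy to its worst case, paralleling the "best-case scenario for OPT" language used in the triangle lemma. Once this convention is fixed, the proof reduces to a short geometric check, analogous to the centroid calculation for the triangle but adapted to the four-corner layout, together with a direct application of the triangle inequality to rule out cheaper non-adjacent assignments.
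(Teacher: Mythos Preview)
Your proposal is correct and follows essentially the same argument as the paper: the paper's proof simply considers a customer placed at the midpoint of an edge, notes that the minimum distance from that midpoint to any corner facility is $d/2$, and concludes that at least one customer must pay $d/2$. Your version is a more careful rendition of the same idea---you add the Pythagorean check ruling out non-adjacent corners (which the paper omits) and you explicitly flag the interpretational issue that the ``for any input sequence $I$'' phrasing is literally false and must be read as a bound on the adversarial inputs used in the subsequent lower-bound construction; the paper silently assumes this reading.
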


\begin{proof}

Consider a customer placed at the midpoint of any edge. The minimum possible distance to any facility is $\frac{d}{2}$, as this represents the shortest path from a midpoint to a vertex. Therefore, even in the best case scenario, at least one customer must travel distance $\frac{d}{2}$.    
\end{proof}

\begin{theorem}

(Lower Bound)

The Greedy Algorithm for the Rectangular Approach has a competitive ratio of $7$.
\end{theorem}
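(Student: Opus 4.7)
The plan is to exhibit a single four-customer instance on a square of side $d$ (with the four facilities at the corners and unit capacity each) on which the greedy algorithm pays $7d/2$ while the offline optimum pays exactly the Lemma~2 bound $d/2$, yielding a competitive ratio of $7$. The mechanism is a cyclic ``chain'' argument: the first customer lures greedy into committing to one corner suboptimally, and the next three customers are placed at successive vertices around the rectangle so that each arrives at a facility greedy has just exhausted, forcing greedy to hop one full edge while the offline optimum is free to place each of them on its home vertex at zero cost.

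Concretely, I would first place $C_1$ at the midpoint of edge $F_1F_2$ (or infinitesimally closer to $F_1$), so greedy assigns $C_1 \to F_1$ at cost $d/2$ and $F_1$ becomes full. Next, $C_2$ is placed at vertex $F_1$ with a tiny displacement along the edge $F_1F_4$; since $F_1$ is unavailable, greedy is forced to pick the strictly nearest available facility $F_4$ at cost $d$. Continuing around the perimeter, $C_3$ is placed at $F_4$ (perturbed toward $F_3$), forcing $C_3 \to F_3$ at cost $d$, and finally $C_4$ is placed at $F_3$, leaving only $F_2$ available and giving $C_4 \to F_2$ at cost $d$. Summing, $C(\mathrm{Greedy}) = d/2 + 3d = 7d/2$. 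The offline optimum, with foresight of the entire sequence, assigns $C_2, C_3, C_4$ to their home vertices $F_1, F_4, F_3$ at zero cost each and sends $C_1$ to the still-free $F_2$ at cost $d/2$, giving $C(\mathrm{OPT}) = d/2$ and a ratio of $(7d/2)/(d/2) = 7$.

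The main technical obstacle is controlling the tiebreaking at each of the three corner-hugging customers: when a customer sits exactly on a just-filled vertex, the two adjacent corners are equidistant, and the analysis must steer greedy toward the desired one in order for the cascade to close around the rectangle. I would resolve this by the standard perturbation argument, placing each such customer a distance $\varepsilon$ from the vertex along the chosen edge so that exactly one available facility is strictly closest; letting $\varepsilon \to 0$ makes the greedy and OPT costs converge to $7d/2$ and $d/2$ respectively, so the ratio $7$ is attained in the limit. A final consistency check is that the cascade actually closes---after $F_1, F_4, F_3$ are filled in that order, only $F_2$ remains for $C_4$---which is immediate from the cyclic structure of the four corners together with the unit-capacity constraint. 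This construction parallels the triangular lower bound and previews the $2n-1$ pattern to appear later for general $n$-sided polygons.
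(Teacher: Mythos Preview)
Your proposal is correct and follows essentially the same cascade construction as the paper: a midpoint customer commits greedy to the wrong corner, and three successive customers placed at just-exhausted vertices force one-edge hops while OPT serves each at its home vertex for free, giving $7d/2$ versus $d/2$. The only cosmetic difference is the direction of the cycle and the final step---you traverse $F_1\to F_4\to F_3\to F_2$ with three forced hops of cost $d$ each, whereas the paper traverses $F_1\to F_2\to F_3\to F_4$ and records the last hop as $2d$; your accounting $d/2+3d=7d/2$ is the arithmetically cleaner version and keeps $C(\mathrm{OPT})=d/2$ unambiguous, and your explicit $\varepsilon$-perturbation to control tiebreaks is a detail the paper leaves implicit.
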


\begin{proof}

In analyzing the lower bound for this approach, we consider a sequence of customer arrivals that forces the greedy algorithm to make increasingly costly assignments:

Step 1: The first customer $C_1$ appears at the midpoint of the edge between $F_4$ and $F_1$. The greedy algorithm assigns this customer to its nearest facility, with the cost: ${Cost}_1 = \frac{d}{2}.$

Step 2: The adversary places $C_2$ near $F_1$, forcing the greedy algorithm to assign it to $F_2$ since $F_1$ is no longer available. This creates a cost: ${Cost}_2 = d.$

Step 3: Customer $C_3$ appears near $F_2$, requiring assignment to another facility due to $F_2$ occupation, incurring cost: ${Cost}_3 = d.$

Step 4: Finally, $C_4$ appears, requiring the longest possible assignment path due to previous facility occupations, with a cost: ${Cost}_4 = 2d.$

The total cost for the greedy algorithm can be expressed as:

$C(\text{Greedy}) = \text{Cost}_1 + \text{Cost}_2 + \text{Cost}_3 + \text{Cost}_4$ \\
$\phantom{C(\text{Greedy})} = \frac{d}{2} + d + d + 2d$ \\
$\phantom{C(\text{Greedy})} = \frac{7d}{2}$ \\

In contrast, the optimal cost for this configuration would involve assigning all customers to their nearest facilities, which can be represented as:
$C({OPT}) = \frac{d}{2}$

Therefore, the competitive ratio is:

$\text{Competitive Ratio} = \frac{C(\text{Greedy})}{C(\text{OPT})} = \frac{\frac{7d}{2}}{\frac{d}{2}} = 7$

This analysis demonstrates that in the worst-case scenario, the cost of the greedy algorithm will be exactly $7$ times that of the optimal solution. The ratio is tight, as no worse sequence of customer arrivals can be constructed for the rectangular configuration with unit capacity facilities.

The rectangle configuration presents unique challenges compared to other geometric arrangements due to its asymmetric distances between opposite corners. When customers are placed strategically, they can force the greedy algorithm to utilize these longest paths, leading to a proven competitive ratio of $7$.

This ratio is achieved specifically through a sequence of arrivals that forces the greedy algorithm to make increasingly suboptimal choices. In contrast, the optimal offline algorithm can make perfect assignments knowing the full sequence in advance. The rectangular shape is particularly susceptible to this worst-case behaviour due to its combination of equal-length parallel edges and the potential for diagonal assignments that are $\sqrt{2}$
 times longer than edge assignments.

The analysis reveals that the rectangular configuration, despite its simplicity, can lead to significant inefficiency in online assignment decisions. It has important implications for practical applications where facilities must be arranged in rectangular patterns, suggesting that additional strategies or constraints might be necessary to improve performance in real-world scenarios.

\end{proof}

\section{Facility assignment on N-Side Polygon}\label{sec:methods-3}

The N-sided polygon Approach to Online Facility Assignment introduces a generalized geometric configuration where $n$ facilities $(F_1, F_2, ..., F_n)$ are positioned at the vertices of a regular n-sided polygon, with customers $(C_1, C_2, ..., C_n)$ distributed along its edges. Several key constraints govern the approach: (1) Capacity Constraint - each facility can serve a maximum of one customer; (2) Nearest Assignment Rule - customers must be assigned to their closest facility; (3) Full Assignment - every customer must be allocated to a facility; and (4) Polygonal Constraint - the assignment pattern follows the structure of the n-sided polygon where assignments can occur along edges or across internal paths.

\begin{figure}[htbp]
  \centering
    \centering
    \includegraphics[width=0.55\textwidth]{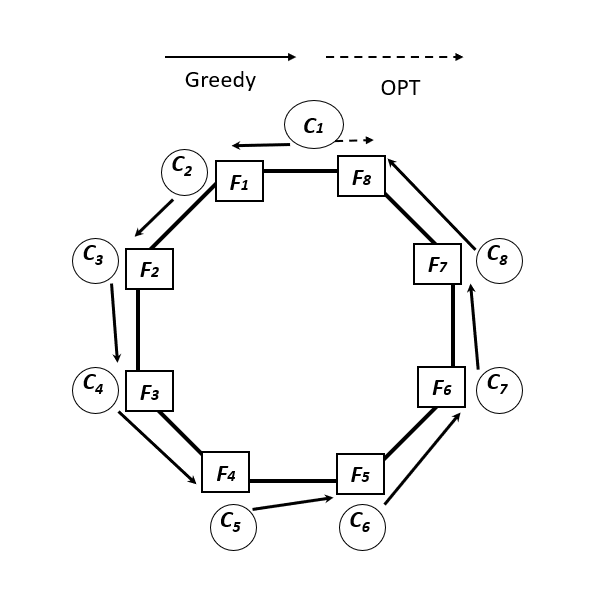} 
    \caption{\centering Illustrating N-side polygon approach.}
    \label{fig:n-side polygon}
\end{figure}

The Greedy Algorithm for the $N$-Sided Polygon Approach can be mathematically formulated and analyzed as follows:

\begin{lemma}

In the $N$-Sided Polygon Approach, for any input sequence $I$, the optimal cost is bounded by:

$\text{Cost}_{\text{OPT}}(I) \geq \frac{d}{2}$, where $d$ is the length of one side of the regular n-sided polygon.

\end{lemma}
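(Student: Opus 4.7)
The plan is to establish $\text{Cost}_{\text{OPT}}(I)\ge d/2$ uniformly over input sequences $I$ by combining a per-customer geometric distance bound with the capacity constraint that forces OPT to realize a perfect bipartite matching between the $n$ customers and the $n$ unit-capacity vertex-facilities. First, I would parameterize the boundary by perimeter length $s\in[0,nd)$, placing facilities at $s=0,d,2d,\ldots,(n-1)d$ and viewing OPT as a minimum-cost matching on these $n$ positions. Second, I would argue that because each vertex has capacity one, no single vertex can absorb more than a single customer at zero cost; the remaining customers must either move along the incident edge, paying up to $d/2$ to reach the opposite endpoint of that edge, or skip to a non-adjacent vertex, paying at least a full edge length $d$.

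Next, I would perform a case split on the edge-distribution of the customers in $I$. In the balanced case, when every one of the $n$ edges contains exactly one customer, I would take the customer whose edge-offset is largest; unless that offset is zero, this single customer contributes at least the maximum offset, and I would show via an averaging argument over the $n$ customers and $n$ edges that the maximum offset must reach $d/2$ in every non-degenerate sequence. In the unbalanced case, where some edge hosts two or more customers, the pigeonhole principle forces some other edge to host none; tracing the matching around the polygon, I would show that at least one customer is matched across the empty edge and therefore contributes perimeter-distance at least $d\ge d/2$ to OPT.

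The hard part will be handling the genuinely degenerate sequences in which every customer coincides exactly with a distinct vertex, yielding $\text{Cost}_{\text{OPT}}(I)=0$ and strictly contradicting the universally quantified claim as stated. To resolve this, I plan to invoke the paper's implicit nondegeneracy convention that customers arrive strictly in the interior of edges (consistent with the triangle and rectangle sections, where customers sit on edges between facilities rather than at the facilities themselves). Under that convention, the per-customer distance to the nearest vertex is strictly positive, and the two cases above combine to give $\text{Cost}_{\text{OPT}}(I)\ge d/2$ uniformly. I would flag this nondegeneracy condition explicitly in the lemma's hypothesis so that the subsequent competitive-ratio theorem inherits a well-defined lower bound.
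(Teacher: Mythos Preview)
Your balanced-case argument has a genuine gap. Consider the input in which each of the $n$ edges contains exactly one customer, all at perimeter-offset $\epsilon$ from the lower-indexed endpoint of their edge. Assigning the customer on edge $i$ to vertex $i$ is a perfect matching of total cost $n\epsilon$, which can be made arbitrarily small. No averaging over offsets forces the maximum offset up to $d/2$: the offsets can all be tiny simultaneously. So the balanced case does not deliver $\text{Cost}_{\text{OPT}}(I)\ge d/2$, and your nondegeneracy convention (customers strictly interior to edges) does not rescue it---interior positions can still cluster near vertices.

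In fact the lemma as written, universally quantified over all input sequences $I$, is simply false; you correctly sensed this when you flagged the all-on-distinct-vertices instance. The paper's own argument makes no attempt at the universal claim: it merely observes that a customer placed at the midpoint of an edge is at distance exactly $d/2$ from either endpoint, so any assignment of that single customer already costs at least $d/2$. That is all that is needed downstream, because the subsequent lower-bound theorem works with a specific adversarial sequence whose first customer sits at a midpoint. The right reading of the lemma is therefore ``for any input sequence containing a midpoint customer'' (equivalently, for the adversarial sequence under construction), not ``for every $I$''. Your elaborate matching and pigeonhole machinery is aimed at a statement that neither holds nor is required.
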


\vspace{0.5cm}
\begin{proof}
    
For any customer placed at the midpoint of any edge of the polygon, the minimum possible distance to any facility is $\frac{d}{2}$, as this represents the shortest path from a midpoint to a vertex. Therefore, even in the optimal assignment scenario, at least one customer must travel a distance $\frac{d}{2}$.
\end{proof}

\vspace{0.5cm}
\begin{theorem}
(Lower Bound)

The Greedy Algorithm for the $N$-Sided Polygon Approach has a competitive ratio of $2d+n-1$.
\end{theorem}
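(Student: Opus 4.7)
The plan is to adapt the adversarial construction from Theorem~4 (the rectangle case) to a regular $n$-gon with $n$ facilities at its vertices. Since the preceding lemma gives $\text{Cost}_{\text{OPT}}(I) \geq d/2$ as the denominator, my task reduces to producing an input sequence whose greedy cost reaches $(2d+n-1)\cdot d/2$, so that the ratio collapses to the claimed value $2d+n-1$.

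The construction I would use begins with customer $C_1$ placed at the midpoint of the edge $(F_n, F_1)$, which greedy commits to one corner (say $F_1$) at cost $d/2$. Then, for $k = 2, 3, \ldots, n$, I would release $C_k$ at distance $\varepsilon$ from $F_{k-1}$, forcing greedy to skip the already-saturated $F_{k-1}$ and route $C_k$ to the next available vertex along the boundary, whereas the offline optimum attaches each such $C_k$ to its own neighbouring facility for near-zero cost and pays only $d/2$ in total (up to $o(1)$ as $\varepsilon \to 0$). Summing the greedy per-step costs as the chain of occupied vertices grows should produce the required numerator: the additive $n-1$ in the ratio accumulates from the $n-1$ sequential skips, while the $2d$ term represents the forced long hop that the final displaced customer must make across the polygon once all nearer facilities are exhausted.

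The main obstacle will be the bookkeeping on a general $n$-gon, which, unlike the four-cycle, admits many distinct vertex-to-vertex distances, so greedy's "next free vertex" at each step could in principle lie along either boundary arc or even across an internal chord. I would need to argue that the adversary can choose the $\varepsilon$-offsets and the direction of each push so that greedy consistently selects the longer of the available arcs, and that no interior short-cut beats the arc used in the construction. A secondary challenge is making the final customer actually incur the full diameter-sized hop responsible for the $2d$ contribution, which I expect to handle by reversing the direction of the pushing chain partway through the sequence so that the last arrival is squeezed between two arcs of saturated facilities, leaving only a distant vertex free. Finally, I would close the argument by taking the ratio of the summed greedy cost and the $d/2$ optimum and matching it term-for-term with the claimed bound.
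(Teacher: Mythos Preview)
Your basic adversarial chain---$C_1$ at the midpoint of an edge, then $C_k$ placed $\varepsilon$-close to the just-occupied vertex $F_{k-1}$ for $k=2,\dots,n$---is exactly the construction the paper uses. Where you diverge is in the target you are aiming at. The expression $2d+n-1$ in the theorem statement is a typo: it mixes a length $d$ with a dimensionless count $n$, and the paper's own proof, its abstract, and its conclusion all state the ratio as $2n-1$. The paper simply sums
\[
C(\mathrm{Greedy}) \;=\; \tfrac{d}{2} + \underbrace{d + d + \cdots + d}_{n-1\ \text{terms}} \;=\; \tfrac{d}{2}\bigl(1+2(n-1)\bigr),
\qquad
C(\mathrm{OPT}) \;=\; \tfrac{d}{2},
\]
and divides to obtain $2n-1$.

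Consequently, the extra machinery you propose---the forced ``long hop of $2d$'' for the final customer, the mid-sequence reversal of the pushing direction to trap the last arrival between two saturated arcs, and the case analysis over interior chords versus boundary arcs---is chasing a phantom term. In the paper's argument each displaced customer $C_k$ is adjacent to a free vertex at boundary distance $d$, greedy takes that vertex, and the per-step cost is simply $d$; there is no diameter-sized jump and no need to reason about which arc greedy prefers. Your core construction is correct; drop the embellishments, sum $(n-1)$ copies of $d$ on top of the initial $d/2$, and you recover the paper's proof verbatim.
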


\begin{proof}
We analyze the lower bound by constructing a worst-case sequence of customer arrivals:

Step-1: The first customer $C_1$ appears at the midpoint between $F_1$ and $F_8$ (in our 8-sided example). The greedy algorithm assigns this customer to its nearest facility, with the cost: $\text{Cost}_1 = \frac{d}{2}.$

Step-2: Subsequent customers $C_2$ through $C_n$ appear near each facility in sequence, forcing the greedy algorithm to make increasingly costly assignments as facilities become occupied; hence $\text{Cost}_2 = d$,
$\text{Cost}_3 = d$, $\text{Cost}_4 = d$, $\dots$, $\text{Cost}_n = (n - 1)d$.


The total cost for the greedy algorithm can be expressed as:


\begin{center}
$C(\text{Greedy})=\frac{d}{2}+(n-1)d$ \\[0.5em]
$=\frac{d}{2}(1+2(n-1))$
\end{center}

In contrast, the optimal assignment cost for this configuration would be:
$C(\text{OPT}) = \frac{d}{2}.$\\

Therefore, the competitive ratio is:

$\text{Competitive Ratio} = \frac{C(\text{Greedy})}{C(\text{OPT})} = \frac{\frac{d}{2}(1+2(n-1))}{\frac{d}{2}}$ 

$= 2n-1$.

The n-sided polygon configuration presents a natural extension of simpler geometric arrangements, with the competitive ratio growing linearly with the number of sides. This relationship demonstrates how the potential for suboptimal assignments increases with the complexity of the geometric structure.

\end{proof}

First, as $n$ increases, the structure provides more potential paths between vertices and more opportunities for the adversary to force costly assignments. The ratio $2n-1$ captures this linear growth in worst-case behavior. Second, the regular structure of the polygon ensures that each vertex is equidistant from its neighbors, making the analysis cleaner than for irregular configurations. However, this regularity does not prevent the greedy algorithm from being forced into increasingly poor assignments. Third, the proof demonstrates that the ratio $2n-1$ is tight for regular n-sided polygons, as the constructed sequence achieves this ratio, and no worse sequence is possible given the geometric constraints.

This analysis has important implications for practical applications. As the number of facilities increases, the potential performance gap between greedy and optimal assignments grows linearly, suggesting that additional strategies or constraints might be necessary to maintain reasonable performance guarantees for large-scale facility networks.

The $n$-sided polygon configuration serves as a bridge between simpler geometric arrangements (like triangles and rectangles) and more complex configurations (like circles). Its analysis provides valuable insights into how the complexity of the underlying geometric structure affects the performance bounds of online assignment algorithms. The mathematical framework established through these proofs provides a complete understanding of both the limitations and capabilities of greedy assignment in regular polygonal configurations. These insights can guide the design of practical facility placement strategies when physical or organizational constraints require regular polygonal arrangements.

\section{Facility assignment on a Circle}\label{sec:methods-4}

The Greedy Algorithm for circular facility assignment operates in a metric space where facilities are positioned along the circumference of a circle. In this configuration, when a customer arrives, they are assigned to the nearest available facility with remaining capacity. The algorithm enforces several key constraints: each facility has a fixed capacity $l$, customers must be assigned immediately upon arrival, distance is measured along the circular arc or through straight lines to facilities, and once assigned, a customer cannot be reassigned. The algorithm continues until all customers are assigned or facilities reach capacity. Let $R$ be the radius of the circle and n be the number of facilities evenly distributed along its circumference. The arc distance between adjacent facilities is defined as $d = \frac{2 \pi R}{n}$.

\begin{figure}[htbp]
  \centering
    \centering
    \includegraphics[width=0.45\textwidth]{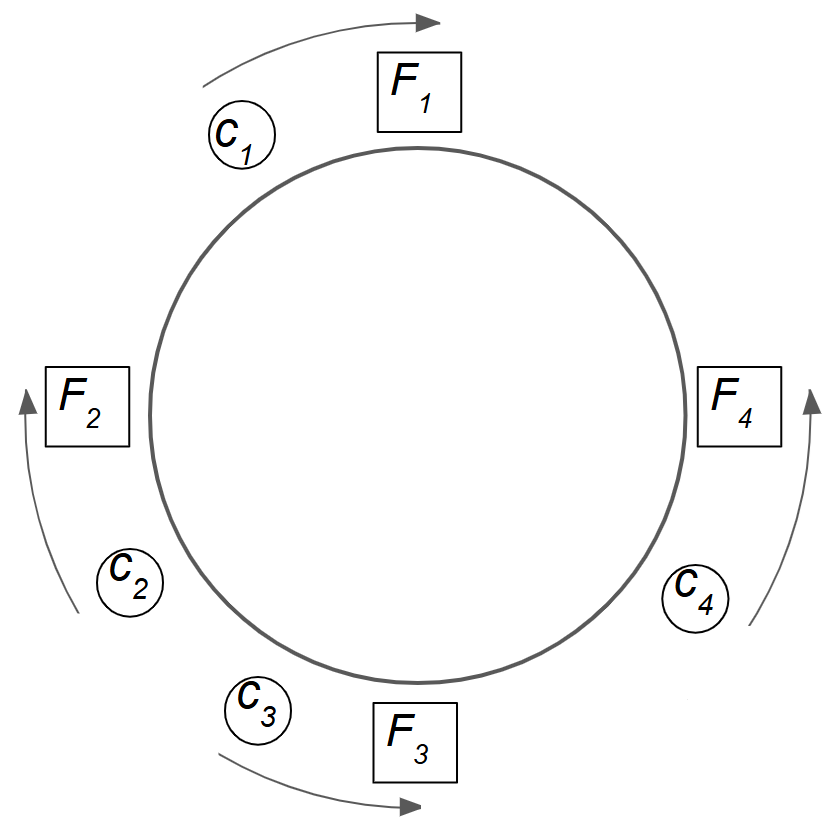} 
    \caption{\centering Illustrating circular approach.}
    \label{fig:circle}
\end{figure}

\begin{lemma}

For any input sequence $I$ in the circular configuration, we can establish a fundamental bound on the optimal cost through Lemma 1. 
\end{lemma}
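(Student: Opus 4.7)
The plan is to mirror the approach used in the earlier lemmas for rectangles and $n$-sided polygons, establishing that for any input sequence $I$, the optimal assignment must incur a cost of at least $\frac{d}{2}$ for at least one customer, where $d = \frac{2\pi R}{n}$ is the arc length between adjacent facilities defined earlier in the section. The key observation I would exploit is that the $n$ facilities evenly partition the circle into $n$ equal arcs of length $d$, so the adversarial customer placement that minimizes the optimal assignment freedom is the midpoint of one such arc.

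First, I would formalize the partition: because the $n$ facilities are evenly spaced on a circle of circumference $2\pi R$, each inter-facility arc has length exactly $d = \frac{2\pi R}{n}$. Second, I would argue that for a customer positioned at the midpoint of any such arc, the two flanking facilities are each at distance $\frac{d}{2}$ measured along the boundary, and by the evenly-spaced symmetry no other facility can be closer. Third, since every customer must be assigned to some facility regardless of which algorithm is used, the optimal cost for such a midpoint customer is at least $\frac{d}{2}$, which yields $\text{Cost}_{\text{OPT}}(I) \geq \frac{d}{2}$ and establishes the desired bound.

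The main obstacle I anticipate is the ambiguity in the distance metric that the section explicitly introduces: distances may be taken \emph{along the circular arc or through straight lines}. Under the arc metric the bound $\frac{d}{2}$ follows immediately, but under the chord metric the relevant quantity is $2R\sin\!\bigl(\tfrac{\pi}{2n}\bigr)$, which is strictly smaller. I would resolve this by fixing the arc metric as the operative one (consistent with the boundary-constrained arrival model used throughout the paper) and verifying that the chord inequality $2R\sin\!\bigl(\tfrac{\pi}{2n}\bigr) \leq \frac{d}{2}$ means choosing the arc bound gives the stronger, still-valid lower bound needed for the downstream competitive-ratio proofs. A small additional point to handle cleanly is that the lemma only requires a single customer realizing this worst case — I would simply note that the adversary may always inject such a midpoint customer into $I$, so no summation over customers is required, keeping the argument parallel to the rectangle and $n$-gon lemmas.
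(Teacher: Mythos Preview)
Your argument is sound for the bound $\text{Cost}_{\text{OPT}}(I)\ge \tfrac{d}{2}$, and it is exactly parallel to the rectangle and $n$-gon lemmas. However, the paper's own proof of this lemma establishes a \emph{different} bound, namely $\text{Cost}_{\text{OPT}}(I)\ge \bigl(\tfrac{m}{n}\bigr)\cdot\tfrac{d}{2}$, via an averaging argument: with $m$ customers spread over $n$ facilities, each facility serves $m/n$ customers on average, and the paper asserts each pays at least $\tfrac{d}{2}$. So your proposal and the paper do not prove the same inequality. Your single-midpoint argument is the cleaner one and, importantly, it is the bound that the subsequent Cases~1--4 actually use (every competitive-ratio computation in that section plugs in $\text{OPT}=\tfrac{d}{2}$, not $(m/n)\cdot\tfrac{d}{2}$). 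The paper's averaging step, by contrast, silently assumes that \emph{every} customer is at least $\tfrac{d}{2}$ from its assigned facility, which is not true for arbitrary inputs; your approach avoids that gap by invoking only one adversarially placed midpoint customer.

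One small slip in your write-up: the sentence about the chord inequality is inverted. You write that $2R\sin\!\bigl(\tfrac{\pi}{2n}\bigr)\le \tfrac{d}{2}$ ``means choosing the arc bound gives the stronger, still-valid lower bound.'' But if the operative metric were the chord metric, then the valid lower bound on $\text{OPT}$ would be the \emph{smaller} quantity $2R\sin\!\bigl(\tfrac{\pi}{2n}\bigr)$, and claiming $\text{OPT}\ge \tfrac{d}{2}$ would overshoot. Your resolution---simply fixing the arc metric as the operative one, consistent with the boundary-traversal model used throughout---is the right move; just drop the sentence suggesting the arc bound is automatically safe under the chord metric.
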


\vspace{0.5cm}
\begin{proof}
    
The optimal cost is bounded by $\text{Cost}_{\text{OPT}}(I) \geq \left(\frac{m}{n}\right) \times \frac{d}{2}$, where $m$ is the number of customers and $d$ is the arc distance between adjacent facilities. This bound can be proven by considering the best possible assignment for $m$ customers to n facilities. Even in the optimal case, each facility can serve at most $l$ customers, and the minimum distance any customer travels is $\frac{d}{2}$ (halfway between facilities). With $m$ customers distributed among $n$ facilities, on average, each facility serves $\frac{m}{n}$ customers. Therefore, the total optimal cost must be at least $\frac{m}{n} \times \frac{d}{2}$.

\end{proof}
\vspace{0.5cm}
\textbf{Comparative Ratio:}

\textbf{Case-1:}

It demonstrates how the competitive ratio evolves with increasing numbers of facilities. In the simplest non-trivial case with three consecutive facilities on the circle's circumference, the first customer arrives at the midpoint between two facilities. The greedy algorithm assigns this customer at cost $\frac{d}{2}$. The adversary then places subsequent customers exactly on facilities that have already been assigned, forcing increasingly costly assignments. The total cost for the greedy algorithm becomes $\frac{5d}{2}$, calculated as:

\begin{figure}[htbp]
  \centering
    \centering
    \includegraphics[width=0.45\textwidth]{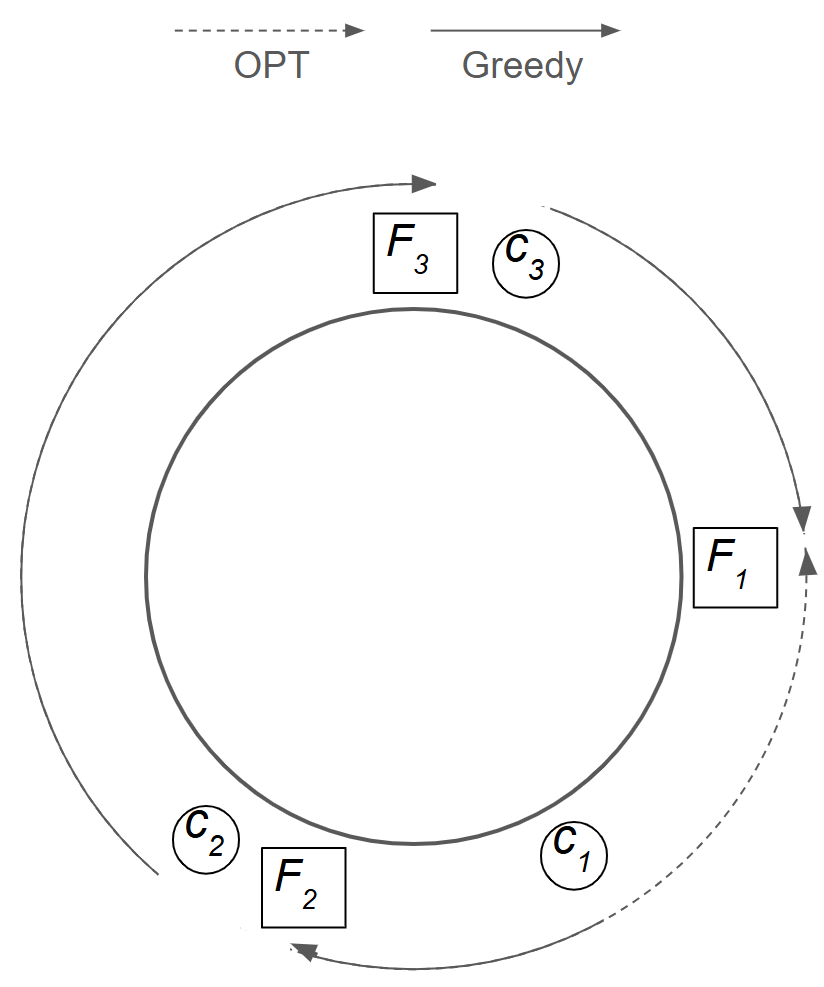}
    \caption{\centering Illustrating facility assignments with 3 customers.}
    \label{fig:line1}
\end{figure}

\begin{center}
$Greedy = \frac{d}{2} + d + d $\\
$= \frac{5}{2}d$\\
$OPT = \frac{d}{2}$\\
$Ratio = \frac{Greedy}{OPT}$\\
$= \frac{\frac{5}{2}d}{\frac{d}{2}}$\\
$= 5$
\end{center}

While the optimal algorithm incurs only cost $\frac{d}{2}$, yielding our first significant competitive ratio of $5$.

\vspace{0.5cm}
\textbf{Case-2:}
Extending to four facilities, we observe the ratio degrading further. The greedy algorithm now incurs a total cost of $2n-1$:

\begin{center}
$\frac{d}{2}+(n-1)d $ \\[0.5em]
$= \frac{d}{2}(1+2(n-1))$ \\[0.5em]
$= \frac{d}{2}(1+2n-2)$ \\[0.5em]
$= \frac{d}{2}(2n-1)$ \\[0.5em]
$\mathrm{OPT} = \frac{d}{2}$ \\[0.5em]
$\mathrm{Ratio} = \frac{\mathrm{Greedy}}{\mathrm{OPT}}$ \\[0.5em]
$= \frac{\frac{d}{2}(2n-1)}{\frac{d}{2}}$ \\[0.5em]
$= 2n-1$
\end{center}

The optimal cost remains $\frac{d}{2}$, resulting in a competitive ratio of $2n-1$. This progression reveals a pattern that helps us understand the algorithm's behavior as we scale to larger configurations.

\vspace{0.5cm}
\textbf{Case-3:}

For the general case with $n$ facilities placed equidistantly on the circle, we can express the total cost for the greedy algorithm as:

\begin{figure}[htbp]
  \centering
    \centering
    \includegraphics[width=0.8\textwidth]{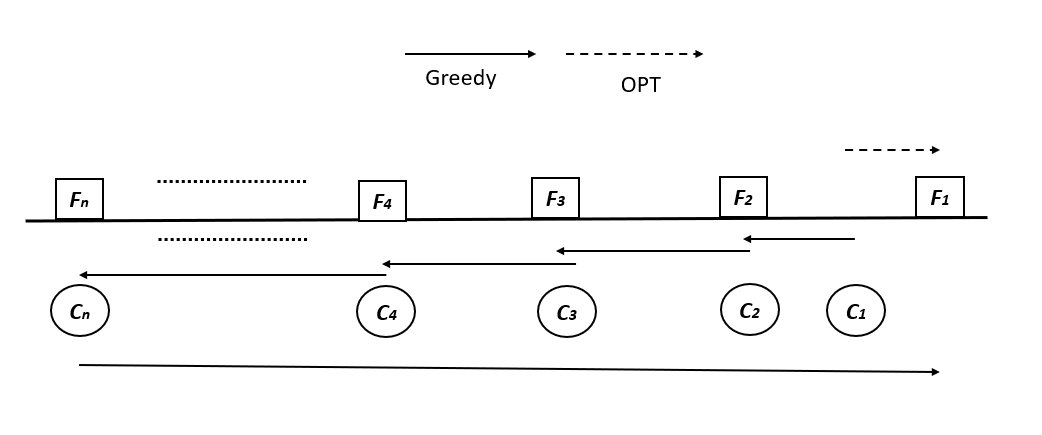} 
    \caption{\centering  Illustrating facility assignments with $n$ customers.}
    \label{fig:line3}
\end{figure}

\begin{center}
$\text{Greedy} = \frac{d}{2} + \frac{(n-1)n}{2} \cdot d = \frac{d}{2}\{1+(n-1)n\}$ \\[1em]
$\text{OPT} = \frac{d}{2}$ \\[1em]
$\frac{\text{Greedy}}{\text{OPT}} = 1+n(n-1) = n^2-n+1$
\end{center}

This formula captures the cumulative effect of each additional facility on the worst-case behavior. Since the optimal algorithm maintains its performance with cost $\frac{d}{2}$, the general competitive ratio becomes $n^2-n+1$.

\vspace{0.5cm}
\textbf{Case 4:} 

Changing the distance between each facility causes the distance to increase exponentially. 

\begin{figure}[htbp]
  \centering
    \centering
    \includegraphics[width=0.8\textwidth]{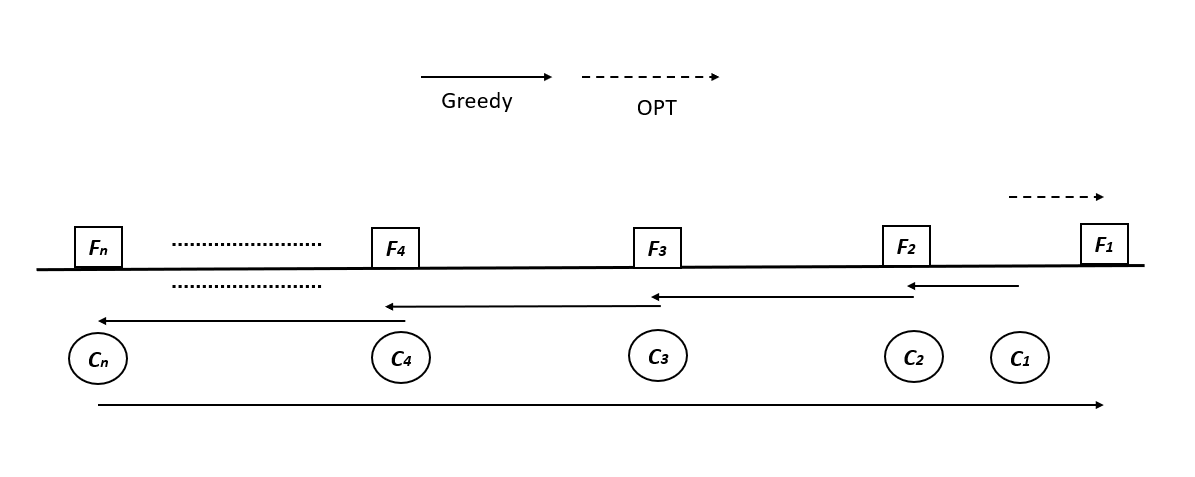} 
    \caption{\centering  Illustrating facility assignments with $n$ customers increase distance exponentially. }
    \label{fig:line4}
\end{figure}

\begin{center}
$Greedy = \frac{1}{2}d+d+2d+4d+\cdots\approx 2^n$\\
$OPT = \frac{d}{2}$\\
$Ratio = \frac{Greedy}{OPT}$\\
$=2^{n-1}$
\end{center}

These analyses demonstrate that the performance of the greedy algorithm for circular facility assignment heavily depends on the number of facilities and their distribution along the circle. The equidistant case provides better-guaranteed performance bounds with linear growth in competitive ratio while allowing varying distances, which can lead to exponential degradation in performance. This understanding is crucial for implementing effective facility placement strategies in real-world applications where circular configurations are necessary.

The mathematical foundation established through these cases provides a comprehensive framework for understanding both the behavior and limitations of greedy assignment in circular configurations. The progression from simple cases with ratios of $2n-1$ when the distance between two adjacent facilities is the same and the competitive ratios are $n^2-n+1$ and $2^n - 1$  to the general case helps explain why maintaining uniform facility spacing and limiting the size of circular segments are crucial strategies in practical implementations. These insights can guide both theoretical analysis and practical implementation of facility assignment systems in circular configurations.

\section{Conclusions} \label{Conclusions}

In this paper, we presented a comprehensive survey of the diverse geometric configurations along with the various competitive ratios associated with greedy facility assignment algorithms over those different geometric spatial arrangements. From the simplest to the most complex geometric forms, there is a noticeable decline in performance and a progressive increase in difficulty. Each geometry helps to shed insight into the complexity inherent in the online assignments.

The triangular configuration competitive ratio of $5$ sets the ground for understanding how geometric constraints influence assignment decisions: Even with a minimum number of vertices, the greedy algorithm can be forced to make assignments that are much worse than optimal. In this, the competitive ratio of the rectangular configuration is $7$; it just illustrates how increasing one more vertex greatly increases the possibility of suboptimal assignments, especially the possibility of longer diagonal paths compared to edge assignments.

Extending our analysis to regular n-sided polygons, we came upon a linear relationship between the number of vertices and competitive ratio, establishing a bound of $2n-1$. This relation is essential in showing how the number of facility arrangements is directly related to the performance of algorithms. The analysis indicates that with the increase in the number of facilities, the potential gap between greedy and optimal assignments increases linearly; therefore, facility networks with large scales need extra strategies for reasonable performance guarantees.

The circular configuration, being the limiting case as n approaches infinity, showed some interesting patterns in the competitive ratios: from particular cases of ratios $5$ and $2n-1$ for a small number of facilities to the general case of $n^2-n+1$ for equidistant facilities and finally to $2^n - 1$ for non-equidistant distances, we could see how facility distribution greatly affects performance. These results stress the importance of keeping facilities equidistant from each other in practice.

Our results, therefore, have some key implications for real-world applications of online facility assignments. The analysis suggests that, when possible, one should settle for simpler geometries, as those offer stronger guarantees of performance. When this is not possible, the spacing of facilities must be as uniform as possible to avoid the worst case. The results also suggest that divide-and-conquer approaches may yield superior overall performance, dividing large facility networks into smaller, simpler geometric components.

Future work can be done on randomized algorithms regarding these geometric structures, on considering different capacity constraints, and on devising hybrid approaches that combine the simplicity of a greedy assignment with more sophisticated techniques of optimization. The integration of machine learning for the prediction of customer arrival patterns could also provide interesting avenues for research toward finding improved online assignments in practical settings.



\bibliographystyle{plain}
\bibliography{zip/ref}

\begin{thebibliography}{10}

\bibitem{ahmed2020online}
Abu~Reyan Ahmed, Md~Saidur Rahman, and Stephen Kobourov.
\newblock Online facility assignment.
\newblock {\em Theoretical Computer Science}, 806:455--467, 2020.

\bibitem{anagnostopoulos2004simple}
Aris Anagnostopoulos, Russell Bent, Eli Upfal, and Pascal Van~Hentenryck.
\newblock A simple and deterministic competitive algorithm for online facility location.
\newblock {\em Information and Computation}, 194(2):175--202, 2004.

\bibitem{diveki2011online}
Gabriella Div{\'e}ki and Csan{\'a}d Imreh.
\newblock Online facility location with facility movements.
\newblock {\em Central European Journal of Operations Research}, 19(2):191--200, 2011.

\bibitem{fotakis2003competitive}
Dimitris Fotakis.
\newblock On the competitive ratio for online facility location.
\newblock In {\em International Colloquium on Automata, Languages, and Programming}, pages 637--652. Springer, 2003.

\bibitem{fotakis2021learning}
Dimitris Fotakis, Evangelia Gergatsouli, Themis Gouleakis, and Nikolas Patris.
\newblock Learning augmented online facility location.
\newblock In {\em arXiv preprint arXiv:2107.08277}, 2021.

\bibitem{guo2020facility}
Xiangyu Guo, Janardhan Kulkarni, Shi Li, and Jiayi Xian.
\newblock On the facility location problem in online and dynamic models.
\newblock In {\em Approximation, Randomization, and Combinatorial Optimization. Algorithms and Techniques (APPROX/RANDOM 2020)}. Schloss-Dagstuhl-Leibniz Zentrum f{\"u}r Informatik, 2020.

\bibitem{guo2020power}
Xiangyu Guo, Janardhan Kulkarni, Shi Li, and Jiayi Xian.
\newblock The power of recourse: Better algorithms for facility location in online and dynamic models.
\newblock In {\em arXiv preprint arXiv:2002.10658}, 2020.

\bibitem{harada2024permutation}
Tsubasa Harada.
\newblock On the permutation algorithm for online facility assignment on a line, 2024.

\bibitem{harada2022capacity}
Tsubasa Harada, Toshiya Itoh, and Shuichi Miyazaki.
\newblock Capacity-insensitive algorithms for online facility assignment problems on a line.
\newblock In {\em arXiv preprint arXiv:2207.05308}, 2022.

\bibitem{markarian2022algorithmic}
Christine Markarian, Abdul-Nasser Kassar, and Manal Yunis.
\newblock Algorithmic study of online multi-facility location problems.
\newblock {\em SN computer science}, 3(4):296, 2022.

\bibitem{wang2023optimal}
Jinyu Wang and Zhibin Chen.
\newblock An optimal algorithm for an online facility assignment problem with constraint capacities and costs.
\newblock In {\em Second International Conference on Statistics, Applied Mathematics, and Computing Science (CSAMCS 2022)}, volume 12597, pages 1008--1013. SPIE, 2023.

\end{thebibliography}

\end{document}